\newif\ifanonymous
\newif\ifdraft
\newcommand{\shortcite}[1]{\cite{#1}}
\DeclareRobustCommand{\defeq}{\mathrel{\rlap{%
  \raisebox{0.3ex}{$\m@th\cdot$}}%
  \raisebox{-0.3ex}{$\m@th\cdot$}}%
  =}
\DeclareRobustCommand{\eqdef}{=\mathrel{\rlap{%
  \raisebox{0.3ex}{$\m@th\cdot$}}%
  \raisebox{-0.3ex}{$\m@th\cdot$}}%
  }
\newcommand{\set}[1]{\{\,#1\,\}}
\newtheorem{definition}{Definition}
\newtheorem{theorem}{Theorem}
\newtheorem{example}{Example}
\newtheoremstyle{mycase}{}{}{}{}{\bf}{.}{.5em}{\thmnote{#1:}~\normalfont
  #3}
\theoremstyle{mycase}
\numberwithin{subcase}{mycase}
\newtheoremstyle{component}{}{}{}{}{\itshape}{.}{.5em}{\thmnote{#3}#1}
\theoremstyle{component}
\newcommand{\ledot}{\mathrel{\ooalign{\hss\raise.200ex\hbox{$\cdot$}\hss\cr$\le$}}}
\newcommand{\gedot}{\mathrel{\ooalign{\hss\raise.200ex\hbox{$\cdot$}\hss\cr$\ge$}}}
\DeclareMathOperator*{\bigtimes}{\vartimes}
\newcommand\bit{\lbrace0,1\rbrace}
\newcommand\setN{\mathbb N}
\newcommand{\calF}{\ensuremath{\mathcal{F}}\xspace}
\newcommand\calR{\ensuremath{\mathcal{R}}\xspace}
\newcommand\calS{\ensuremath{\mathcal{S}}\xspace}
\newcommand{\calU}{\ensuremath{\mathcal{U}}\xspace}
\newcommand\calV{\ensuremath{\mathcal{V}}\xspace}
\newcommand\calX{\ensuremath{\mathcal{X}}\xspace}
\newcommand\calY{\ensuremath{\mathcal{Y}}\xspace}
\newcommand{\mathcmd}[1]{{\normalfont\ensuremath{#1}}\xspace}
\newcommand{\mathfun}[1]{\mathcmd{\mathit{#1}}}
\newcommand{\mathvalue}[1]{\mathcmd{\mathit{#1}}}
\newcommand{\textop}[1]{\relax\ifmmode\mathop{\text{#1}}\else\text{#1}\fi}
\DeclareMathOperator\E E
\newcommand{\ctl}{\mathfun{ctl}}
\newcommand{\dat}{\mathfun{dat}}
\newcommand{\rch}{\mathfun{rch}}
\newcommand{\minX}{\mathfun{min}_X}
\newcommand{\Vres}{\calV^\mathit{res}}
\newcommand{\Vpos}{V^\rch_\mathit{pos}}
\newcommand{\Rpoisoned}{\mathvalue{P}_r}
\newcommand{\Rnotpoisoned}{\mathvalue{NP}_r}
\newcommand{\Rneutralized}{\mathvalue{N}_r}
\newcommand{\Vrch}{\calV_\rch}
\newcommand{\Vdat}{\calV_\dat}
\newcommand{\Gctl}{G_\ctl}
\newcommand{\sM}{(M,\vec u)\vDash}
\newcommand{\BT}{\mathit{BT}}
\newcommand{\ST}{\mathit{ST}}
\newcommand{\BS}{\mathit{BS}}
\newcommand{\exnumber}{34}
\begin{document}

\title{Causality \& Control Flow}
\def\titlerunning{Causality \& Control Flow}
\ifanonymous 
\def\authorrunning{Anonymous}
\author{}
\else 
\author{
    Robert K\"{u}nnemann
    \institute{ CISPA Helmholtz Center for Information Security}
    \and 
    Deepak Garg
    \institute{
        MPI-SWS}
    \and 
    Michael Backes
    \institute{ CISPA Helmholtz Center for Information Security}
} 
\def\authorrunning{K\"{u}nnemann, Garg \& Backes}
\fi
\begin{full}
     \date{}
\end{full}
\maketitle

\begin{abstract}
Causality has been the issue of philosophic debate since
Hippocrates. It is used in formal verification and testing, e.g.,
to explain counterexamples or construct fault trees.  Recent work
defines actual causation in terms of Pearl's causality framework,
but most definitions brought forward so far struggle with examples
where one event preempts another one.  A key point to capturing
such examples in the context of programs or distributed systems is
a sound treatment of control flow.  We discuss how causal models
should incorporate control flow and discover that much of what
Pearl/Halpern's notion of contingencies tries to capture is
captured better by an explicit modelling of the control flow in
terms of structural equations and an arguably simpler definition.
Inspired by causality notions in the security domain, we bring
forward a definition of causality that takes these
control-variables into account.  This definition provides a clear
picture of the interaction between control flow and causality and
captures these notoriously difficult preemption examples without
secondary concepts.  We give convincing results on a benchmark of
\exnumber{} examples from the literature.
\end{abstract}

\section{Introduction}

A growing body of literature is concerned with notions of
accountability in security protocols, as in many scenarios, e.g.,
electronic voting, certified e-mail, online transactions, or when
personal data is processed within a company, not all agents can be
trusted to behave according to some established protocol~\cite{Bella:2006:APF:1151414.1151416}. 
Thus, in order to
specify accountability, i.e., the
ability of a protocol to detect misbehaviour, 
\begin{full}
but also to reason about
test coverage~\cite{Chockler:2008:CSS:1352582.1352588},
or explain attacks~\cite{Beer2009},
\end{full}
the information security domain needs a reliable notion of
what it means for
a protocol event to \emph{cause} a security violation
in a given scenario.
The security domain has proposed causal notions specifically for
network traces, which, in contrast to traditional notions of
causality, capture actions sufficient to cause an event 
and put a focus on control flow~\cite{Datta2015a}.
In this work, we investigate these ideas in a more general setting;
generalising, improving and validating them to provide a sound basis
for causal reasoning in protocols and beyond.

The problem we investigate is called \emph{actual causation},
as opposed to \emph{type causation}, which aims at deriving 
general statements, e.g., ``smoking causes cancer'' not
linked to a specific scenario.
Starting from 
Lewis' `closest-world concept'~\shortcite{Lewis1973-LEWC},
philosophers have largely accepted
\emph{counter-factual} reasoning, i.e.,
investigating causal claims by
regarding hypothetical scenarios
of the form `had A not occurred, B would not have occurred',
as a means to determine actual causation.
Pearl's causality framework~\shortcite{pearl-book}
provides a basis for such reasoning. 

So far, control flow has largely been ignored in causal reasoning.
This is not very surprising, considering that the causation literature
typically treats real-life examples inspired from criminal
law. Control flow is simply not a well-defined notion there.
Albeit, precisely those scenarios where the order of
events is relevant, e.g., 
an event might
prevent another event from happening, 
turn out
to be notoriously difficult to explain using counterfactual reasoning.
Once we consider each potential course of events as
a control-flow path consisting of events that may enable or prevent
each other, they become easy to handle.

To accommodate preemption, 
Pearl and Halpern's very influential notion of causation
has been modified several
times~\cite{DBLP:conf/ijcai/Halpern15},
complemented with secondary notions~\cite{DBLP:journals/corr/HalpernH13}
and
ad-hoc modifications have been proposed~\cite[p.~26]{DBLP:journals/corr/HalpernH13}.
Neither of these solutions provides a satisfying answer as to how these
examples should be handled in general.

In this work, we provide an account of the relation between control
flow and causation. This gives us the means to adequately capture
preemption in cases where we can speak of control flow.  
We propose that control flow variables should be modelled
explicitly, as they can capture the course of events that lead to a certain
outcome. Once they are made explicit,
we can capture these difficult examples and provide
a notion of actual causation that 
is simple and intuitive,
captures joint as well as independent causes,
gets by without secondary notions of normality and defaults
and
readily applies to Pearl's causality framework.
Our contributions are the following:
\begin{enumerate}

    \item We explain how control flow should be incorporated in causal
        models 
        and propose a formalism that
        makes control flow explicit. 

    \item We show how control flow helps to handle
        preemption
        without resorting to
        secondary
        notions like defaults and normality or ad-hoc modifications to the
        model. 
        Preemption examples are
        notorious for being difficult to
        handle~\cite{causal-powers,Hitchcock2007-HITPPA,DBLP:journals/corr/HalpernH13},

    \item We relate control flow to 
        \emph{structural contingencies} 
        introduced by Halpern and Pearl~\cite{DBLP:journals/corr/abs-1301-2275},
        providing evidence that
        what this notion achieves is very similar to a simple fixing of
        control flow variables, and that it provides unintuitive results when
        applied outside control flow.

    \item Finally, we validate our proposal, control flow preserving
        sufficient causation, on $\exnumber{}$ examples,
        including all scenarios discussed in~\cite{weslake2015partial}
        and~\cite{DBLP:conf/ijcai/Halpern15}.

\end{enumerate}

\paragraph{Notation}\label{sec:notation}%
%
%
%
%
%
%
We write $\vec t$ for a sequence $t_1,\dotsc,t_n$ if $n$ is clear from
the context and 
use
$(a_1,\ldots,a_n) \cdot (b_1,\ldots,b_m)=
(a_1,\ldots,a_n, b_1,\ldots,b_m)$
to denote concatenation.
%
%
%
%
We filter a sequence $l$ by a set $S$, denoted $l|_S$, by removing each element
that is not in $S$. 
%

\section{Causality framework (Review)}

We review the causality framework introduced by
Pearl~\shortcite{pearl-book},
also known as the \emph{structural equations model}.
The causality framework models how random variables influence
each other. The set of random variables, which we assume discrete,
is partitioned into
a set $\calU$ of \emph{exogenous} variables, variables that are
outside the model, e.g., in the case of a security protocol, the
scheduling and the
attack the adversary decides to mount, and
a set $\calV$ of \emph{endogenous} variables, 
which are ultimately determined by the value of the
exogenous variables. 
A \emph{signature} is
a triple consisting of $\calU$, $\calV$ and function $\calR$
associating a range, i.e., a set, to each variable $Y\in\calU \cup \calV$.
A causal model on this signature defines the relation between
endogenous variables and exogenous variables or other endogenous
variables in terms of a set of equations.
\begin{definition}[Causal model]\label{def:causal-model} 
    A \emph{causal model} $M$ over a signature $\calS=(\calU,\calV,\calR)$
    is a pair of said signature $\calS$ and a set of functions
    $\calF=\set{F_X}_{X\in\calV}$ such that, for each $X\in\calV$,
    \[ F_X : (\bigtimes_{U\in\calU} \calR(U) ) \times
    (\bigtimes_{Y\in\calV\setminus \set{X}} \calR(Y) ) 
\to \calR(X). \]
\end{definition}

Each causal model induces a \emph{causal
network}, a graph with
a node for each variable in
$\calV$, and an edge from $X$ to $Y$ iff $F_Y$
depends on $X$.
($Y$ depends on $X$ iff there is a setting for the variables in 
$\calV\cup\calU\setminus\set{X,Y}$
such that modifying $X$ changes the value of $Y$.)
If the causal graph associated to a causal model $M$ is
acyclic, then each
setting $\vec u$ of the variables in
$\calU$ provides a unique solution to the equations in $M$. 
Throughout this paper, we only consider 
causal models that have this property.
We call a vector setting the variables in $\calU$
a \emph{context}, and 
a pair $(M,\vec u)$ of a causal model and a context \emph{a
situation}.
All modern definitions of
causality follow 
a counterfactual approach, 
which requires answering `what if'
questions. 
\begin{full}
It is not always possible to do this by observing actual outcomes.
Consider the following example.

\begin{example}[Wet ground]\label{ex:wet-ground}
The ground in front of Charlie's house is slippery when wet. Not only
does it become wet when it rains; if the neighbour's sprinkler is
turned on, the ground gets wet, too. The neighbour turns on the
    sprinkler unless it rains. Let $R\in\calU$ be 1 if it rains, and
    0 otherwise, and consider the following equations for a causal
      model on $R$ and endogenous variables $W$, $S$ and $F$ with
      range $\bit$.
      \begin{align*}
          S & = \neg R & \text{(The sprinkler is on if it does not
          rain.)}\\
          W & = R \lor S & 
          \text{(The sprinkler or the rain wets the ground.)}\\
          F & = W & \text{(Charlie falls when the ground is
          slippery.)}
      \end{align*}
\end{example}
Clearly, the ground being wet is a cause to Charlie's falling,
but we cannot argue counterfactually, because the counterfactual case
never actually occurs: the ground is always wet. We need to intervene
on the causal model.
\end{full}

\begin{definition}[Modified causal model]\label{def:modified-causal-model} 
    Given a causal model 
    $M=((\calU,\calV,\calR),\calF)$, we define the 
    \emph{modified causal model}
    $M_{\vec X \leftarrow \vec x}$ over the signature
    $\calS_{\vec X}=(\calU,\calV \setminus \vec X, 
    \calR|_{\calU \cup \calV \setminus \vec{X}})$
    by replacing each endogenous variable
    $X\in\vec X$ 
    in $\calF$
    with 
    the corresponding $x\in\vec x$,
    obtaining $\calF_{\vec X \leftarrow \vec x}$. Then,
    $M_{\vec X \leftarrow \vec x} = (\calS_{\vec X}, \calF_{\vec X \leftarrow \vec x})$.
\end{definition}

We can now define how to evaluate queries on
causal models w.r.t.\ interventions on a vector of variables,
which allows us to answer `what if' questions.

\begin{definition}[Causal formula]\label{def:basic-causal-formula} 
    A \emph{causal formula} has the form
    $[Y_1 \leftarrow y_1, \ldots, Y_n \leftarrow y_n]\varphi$
    (abbreviated $[\vec Y \leftarrow \vec y]\varphi$), where
    \begin{itemize}
        \item $\varphi$ is a boolean combination of primitive events,
            i.e., formulas of the form $X=x$ for $X\in\calV$,
            $x\in\calR(X)$,
        \item $Y_1,\ldots,Y_n\in\calV$ are distinct,
        \item $y_i\in\calR(Y_i)$.
    \end{itemize}
    We write $(M,\vec u)\vDash [\vec Y \leftarrow \vec y]\varphi$
    \begin{full}
    ($[\vec Y\leftarrow \vec y]\varphi$ 
    is true in a causal model $M$ given
    context $\vec u$) %
    \end{full}
    if the (unique) solution to the
    equations in $M_{\vec Y\leftarrow \vec y}$ in the context $\vec u$ 
    satisfies $\varphi$.
\end{definition}


\section{Sufficient causes}\label{sec:sufficient}

We define a causality notion based on sufficiency.
In order to
allow for comparison with existing notions of causation,
we chose to
formulate this notion in Pearl's causation framework,
as opposed to formalisms that already incorporate temporality, e.g.,
Kripke structures, but would obscure this comparison.
This simplistic notion of causality, which we will later extend to
models with explicit control flow, captures the causal variables that,
by themselves, guarantee the outcome. 


\begin{definition}[Sufficient cause]\label{def:sufficient-cause} 
    $\vec X = \vec x$ is a \emph{sufficient cause} of $\varphi$ in
    $(M,\vec u)$ if the following three conditions hold.
    \begin{enumerate}
        \item [SF1.] $(M,\vec u)\vDash (\vec X = \vec x) \land
            \varphi$.
        \item [SF2.] For all $\vec z$,
            $(M,\vec u) \vDash [(\calV \setminus \vec X)\leftarrow \vec z] \varphi$.
        \item [SF3.] $\vec X$ is minimal: No strict subset 
            $\vec X'$ of $\vec X$ satisfies SF1 and SF2.
    \end{enumerate}
    We say $\vec X$ is a sufficient cause for $\varphi$ if
    this is the case for some $\vec x$. Any non-empty subset of
    $\vec X$ is \emph{part} of the sufficient cause $\vec X$.
\end{definition}

Sufficient causes are well-suited for establishing 
joint causation, i.e.,
several factors that independently would not cause an injury, but do
so in combination.
Consider the following scenario: 
\begin{example}[Forest fire, conjunctive]\label{ex:forest-conjunctive}
Person $A$ drops a canister full of gasoline in the forest, which
soaks a tree. An hour later, $B$ smokes a cigarette next to that
tree. A and B have joint responsibility for the resulting forest fire
($\mathit{FF} = A \land B$). The above definition yields $(A, B, FF) = (1,1,1)$ as
the sufficient cause.
\end{example}

In contrast,  
the traditional but-for test
(or \emph{condicio sine qua non}),
formulates a necessary condition. 
$A$ and $B$ on their own are necessary causes, despite the fact that
the fire was jointly caused.
Sufficient causation can
distinguish joint causation, like in this case,
from independent causation, like in the case where the forest was dry and both $A$
and $B$ dropped cigarettes independently.

\emph{Remark:} %
Similar to how 
the set of necessary
causes for any $O=o$ contains
the singleton cause $O=o$,
$O=o$ is also
a part of each sufficient cause for $O=o$.
Hence, it can be filtered out for most purposes.


\section{Modelling control flow}\label{sec:controlflow}

In this section, we discuss how control flow should be incorporated
into causal models.
Consider the famous \emph{late preemption} example,
where Suzy and Billy both throw stones at
a bottle.
Suzy's stone hits the bottle first and shatters it, while
Billy would have hit, had the bottle still been 
there~\cite[Example~3.2]{DBLP:conf/ijcai/Halpern15}.
We will discuss
several models of this example.

\begin{example}[Control flow in equations]\label{ex:ctl-eq}
    Exogenous variables $\ST$ and $\BT$ are 1 if Suzy, respectively
    Billy, throws. The endogenous variable $\BS$ is 1 (bottle shatters) if $\ST\lor\BT$.
\end{example}
Pearl and Halpern propose a slightly modified model of this situation
which captures the relationship between the bottle being shattered and the
bottle being hittable by Billy explicitly~\shortcite{DBLP:journals/corr/abs-1301-2275}.
However, the relationship
between the two is fixed in the model. 
Datta, Garg, Kaynar and Sharma observe this and therefore 
introduce exogenous variables to
determine which stone reaches the bottle first depending on the
context~\shortcite{tracing-causes}.
\begin{example}[Control flow in context]\label{ex:ctl-context}
    Exogenous variables $\ST$ and $\BT$ are 1 if Suzy, respectively
    Billy throws, and $R$ is 1 if Suzy's throw reaches the bottle
    first. Then $\BS\in\calV$ is $\ST$ if $R=1$, and otherwise $\BT$.
\end{example}
Pearl and Halpern's solution can be transferred to the case where the
order is not fixed a priori by
explicitly representing the temporal order in
distinguished variables.
\begin{example}[Control flow in variables]\label{ex:ctl-var}
    Exogenous variables $T_1$ to $T_n$ with range $\set{S,B,N}$ model
    whether Suzy, Billy or no-one throws a stone at point $i$. 
    Endogenous variable $\BS_i$ is 1 if $\BS_{i-1}=0$ and 
    $T_i\neq N$. $\BS=1$ if $\BS_i=1$ for any $i$.
\end{example}
$BS_i$ models whether the bottle is
available for hitting at point $i+1$,
similar to the concept of
control flow variables in programming.
In programming, control flow is
the order in which statements are evaluated. 
Interpreting the above
model as a program, $\BS_i$ controls whether $\BS$ is assigned 
1, similar to $n$ nested if-statements surrounding an assignment
$\BS\defeq 1$.

\paragraph{Control flow should be modelled within variables.}

Among these three solutions
(Examples~\ref{ex:ctl-eq},~\ref{ex:ctl-context} and~\ref{ex:ctl-var}),
only the second and third solutions are able to capture preemption,
as preemption is about the temporal relation between
events, and control flow captures just that. Without examining the
equations themselves, it is not possible to distinguish this order, as
Example~\ref{ex:ctl-eq} demonstrates.
The second solution is not always sufficient, 
since control flow is often determined by data flow,
e.g., when branching on a variable.
Hence, preemption can be accounted for by modelling control flow and
data flow separately, at least in scenarios where these concepts are
meaningful, e.g., in programs.
Even for distributed systems, once the scheduler is made explicit, the
system can be adequately modelled as a program and hence control flow
can be distinguished.
In the following, we will demonstrate this point on a number of other
examples (which admittedly have little to do with programs, but we
want to stay close to the literature. They can be recast easily,
imagine, e.g., Suzy's and Billy's stone being illegal instructions in
a message queue).
Note that there might be ways of modelling preemption which do not
fall in any of these three classes, however, all examples we are aware
of follow one of these three paradigms.
We formalise our assumptions on control flow as follows.

\begin{definition}[Causal model with control flow]\label{def:ext-causal-model} 
    We say an
    acyclic causal model $M=(S,\calF)$ over a signature
    $\calS=(\calU,\calV,\calR)$
    has control flow variables $\Vrch$
    if
    \begin{enumerate}
        \item $\calV$ can be partitioned into $\Vrch$ and a set of
            data variables $\Vdat$,

        \item $\calR(V^\rch)=\set{\top,\bot}$ for any
            $V^\rch\in\Vrch$, and,

        \item for $\Gctl$ the subgraph of $M$'s causal network
            containing all nodes in $\Vrch$ and all edges between
            them, and for all contexts $\vec u$ and 
            data variable assignments $\vec v$, 
            if all parents $\vec V^\rch_p$ of a node $V^\rch\in \Gctl$ are set to
            $\bot$, this node is also set to $\bot$, i.e.,
            if $\sM [\vec \Vdat \gets \vec v] (V^\rch_p =\bot)$ for each
            parent $\vec V^\rch_p$, then
            $\sM [\vec \Vdat \gets \vec v] (V^\rch=\bot)$. \label{it:valid}


    \end{enumerate}
\end{definition}

The active variables in $\Gctl$, i.e., those equal to $\top$,
represent the current control flow.
Data flow variables are assigned depending on which nodes in $\Gctl$
are active.

For instance:
if we consider structured control flow,
$\Gctl$ is connected and thus
(because $M$ is acyclic) a tree.
Furthermore, the set $\set{V^\rch \mid (M,u)\vDash V^\rch=\top}$ always comprises a path.
The relationship to non-concurrent imperative programming languages
becomes clearer, if we express each function $F_V$ for a data
variable $V\in\Vdat$ as 
\[ F_V = \begin{cases}
    F_{V,V^\rch_1} & \text{if $\Vpos=V^\rch_1$} \\
    & \vdots \\
    F_{V,V^\rch_n} & \text{if $\Vpos = V^\rch_n$} \\
    F_{V,V^\rch_p} & \text{otherwise}
\end{cases}
\]
where $\Vpos$
is
the current position, i.e.,
the deepest element of the path $\set{V^\rch \mid (M,u)\vDash V^\rch=\top}$,
$V^\rch_p$ the parent of $\Vpos$,
and
$F_{V,V^\rch_1},\ldots, F_{V,V^\rch_n}$
depend only on variables in $\Vdat$.
Essentially, at each control-flow position,
$F_{V,V^\rch_i}$ may
assign a value to $V$, or otherwise
the parent assignment remains in
effect.

\paragraph{Intervention on control flow variables.}

For most (but not all) purposes, we are not interested in control flow
variables as parts of sufficient causes.
We could filter them out, i.e., if $\vec X$ is
a sufficient cause, report $\vec X \setminus \Vrch$.
%
%
Instead, we chose to achieve this by
restricting intervention to a subset of variables 
$\Vres = \Vdat = \calV\setminus \calV_\rch$.
Both approaches model different expectations on the system. We want to
assume the control flow to be (locally) determined and not
consider failure in, e.g., conditional branching.
The same assumption is made by
Datta,
Garg,
Kaynar,
Sharma and
Sinha
(DGKSS)~\shortcite{Datta2015a} and
Beckers~\shortcite{545701}.

\begin{full}
Upon close inspection, this is similar to what DGKSS's definition of
causal traces achieves. The process calculus they propose 
implicitly restricts intervention to
the data flow.
On the other hand, it does not support branching,
so it is difficult to assess the implications within this calculus.
This models different expectations on the system: do we want to
assume the control flow to be (locally) determined, or are we
considering failure in, e.g., conditional branching.
If we never assume failure, we can gather fewer causes, but all of
them correspond to a sufficient cause without this assumption.

Altering condition \emph{SF2} in
Definition~\ref{def:sufficient-cause}, we can model restricted
intervention as
\begin{itemize}
\item [SF2'] For all $\vec z$, $\sM [\Vres \setminus \vec X \gets z] \varphi$,
    with $\Vres=\calV\setminus \calV_\rch$.
\end{itemize}

\begin{theorem}
All (minimal) sufficient causes $\vec X$ with restriction subsume a
(minimal) sufficient cause without restriction. 
\end{theorem}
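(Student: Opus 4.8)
The plan is to reduce SF2' entirely to SF2 by a bookkeeping observation about \emph{which} variables an intervention touches, so that the whole statement collapses into a minimality argument about two nested set families. The key remark---call it the coincidence lemma---is that requiring $\varphi$ under all interventions on the data variables $\Vres\setminus\vec X$ (condition SF2') is \emph{literally} the same requirement as SF2 for the enlarged set $\vec X\cup\Vrch$: the variables pinned by the intervention are in both cases exactly $\Vres\setminus\vec X=\calV\setminus(\vec X\cup\Vrch)$, while the remaining variables---the data part of $\vec X$ together with every control-flow variable outside $\vec X$---are in both cases left to be solved for by the equations. Reading this off \autoref{def:basic-causal-formula} gives, with no computation: $\vec X$ satisfies SF1 and SF2' iff $\vec X\cup\Vrch$ satisfies SF1 and SF2. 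Note SF1 is always met by taking $\vec x$ to be the actual values, since it only asserts that $\varphi$ holds in $(M,\vec u)$.

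First I would use the coincidence lemma to show that a \emph{minimal} restricted cause $\vec X$ contains no control-flow variable. Applying the lemma to $\vec X$ and to $\vec X\cap\Vres$ yields the same right-hand side, namely SF2 for $(\vec X\cap\Vres)\cup\Vrch=\vec X\cup\Vrch$; hence $\vec X$ and $\vec X\cap\Vres$ satisfy SF2' simultaneously, and restricted minimality forces $\vec X\subseteq\Vres$. The lemma then certifies that $\vec X\cup\Vrch$ satisfies SF1 and SF2, so it is an unrestricted sufficient set, though possibly non-minimal. I would take $\vec Y$ to be any $\subseteq$-minimal subset of $\vec X\cup\Vrch$ that still satisfies SF1 and SF2. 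Such a $\vec Y$ is in fact a genuine (globally) minimal sufficient cause: any strict subset of $\vec Y$ satisfying SF1 and SF2 would be a subset of $\vec X\cup\Vrch$ smaller than $\vec Y$, contradicting its choice, so SF3 holds.

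It then remains to check that $\vec Y$ and $\vec X$ agree on the data variables, and here a second, standard fact about acyclic models enters: fixing further variables to the values they already take in the unique solution does not change that solution, so SF2 is preserved when the pinned set grows (upward closure). Writing $\vec Y_\dat=\vec Y\cap\Vres$, we have $\vec Y_\dat\subseteq\vec X$ from $\vec Y\subseteq\vec X\cup\Vrch$; upward closure upgrades $\vec Y$'s SF2 to SF2 for $\vec Y_\dat\cup\Vrch$, and the coincidence lemma turns that back into SF2' for $\vec Y_\dat$. Minimality of $\vec X$ in the restricted sense then gives $\vec Y_\dat=\vec X$, whence $\vec X=\vec Y\cap\Vres\subseteq\vec Y$ with $\vec Y\setminus\vec X\subseteq\Vrch$. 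This is exactly the asserted subsumption: each restricted minimal cause $\vec X$ sits inside a minimal unrestricted cause $\vec Y$ that extends it only by control-flow variables.

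I expect the main obstacle to be pinning down the semantics underlying both the coincidence lemma and upward closure: one must argue that leaving the control-flow variables to be solved for assigns $\varphi$ the same truth value as the unrestricted intervention that pins them to their computed values. This is the ``set-a-variable-to-its-own-solution-value'' property, which follows from the uniqueness of solutions for acyclic models guaranteed after \autoref{def:causal-model}. Once that property is isolated and justified once and for all, the remainder is purely set-theoretic bookkeeping together with the two minimality arguments.
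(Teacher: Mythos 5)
Your proof is correct and follows essentially the same route as the paper's: the identity $\Vres\setminus\vec X=\calV\setminus(\vec X\cup\Vrch)$ (your coincidence lemma), the monotonicity of SF2 under enlarging the candidate set, and the contradiction with restricted minimality of $\vec X$ are exactly the ingredients of the paper's argument. The only difference is organizational --- the paper shrinks $\Vrch$ to a minimal control-flow set $\calV_\rch^{\minX}$ and then verifies SF3 for $\vec X\cup\calV_\rch^{\minX}$, whereas you take a globally minimal subset of $\vec X\cup\Vrch$ and then recover that its data part equals $\vec X$ --- and you make explicit the set-a-variable-to-its-own-solution-value property that the paper uses implicitly.
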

\begin{proof}
\emph{SF2'}
is equivalent to
\begin{align*}
    & \sM [ \calV \setminus \set{\vec X \cup \calV_\rch} \gets z ] \varphi \forall
    z 
    \\
    \iff & \sM [\calV \setminus \set{\vec X \cup \calV_\rch^{\minX}} \gets z ] \varphi \forall
    z 
\end{align*}
for some minimal set of control flow nodes $\calV_\rch^{\minX}$. 
Given
that $\vec X$ is minimal w.r.t.\ \emph{SF2'}, and hence $\vec X \cap
\calV_\rch = \emptyset$, this sufficient cause
$\vec X \cup \calV_\rch^{\minX}$ is also minimal
w.r.t.\ \emph{SF2}, since for any subset 
$\vec X' \subsetneq \vec X \cup \calV_\rch^{\minX}$,
s.t. $\sM [\calV \setminus X'\gets z]\varphi \forall z$,
\begin{align*}
    & \sM [ \calV \setminus \set{ \vec X'|_{\Vres} \cup
    V_\rch}\gets z]\varphi \;\forall z
    \\
    \iff & 
    \sM [ \Vres \setminus  \vec X'|_{\Vres} \gets z]\varphi \;\forall z.
\end{align*}
Since, $\vec X'|_{\Vres} \subsetneq \vec X$ by minimality of $\calV_\rch^{\minX}$,
this contradicts the minimality of $\vec X$ w.r.t.\ to \emph{SF2'}.
\end{proof}

The converse does not hold, consider, e.g., Halpern's extension of the
conjunctive forest fire example~\cite[Example~3.6]{DBLP:conf/ijcai/Halpern15},
where three different mechanism can produce the fire,
depending on whether
$A\land B$, $\neg A \land B$ or $A \land \neg B$ is the case.
In this case, the possibility that all of these mechanisms fail,
even if $A\lor B$, leads to a third cause containing both $A$ and $B$
(see Table~\ref{tab:examples}, Forest fire, disjunctive, ext.).


%


In summary, forbidding intervention on control flow variables may lose
some granularity in the distinction of causes, but is often appropriate,
when the control flow is assumed to be infallible.
Notice however, that 
none of these notions captures the fact that 
what ever mechanism produces the fire in case $\neg A \land B$ or
$A \land \neg B$ was
not related to its actual coming about.
%
This will be the subject of the next two sections.
\end{full}

\paragraph{What is control flow outside computer programs?}

For a computer program written in an imperative language,
control flow is widely understood to be the order in which statements
are executed, e.g., a sequence of line numbers.
This definition can be easily extended to distributed systems,
however, many examples in the causality literature discuss human
agents in physical interaction.
Our main objective is a reliable notion of causality for distributed
systems, but we also want it  to be grounded in the existing body of
work on causality.
To this end, we make the modelling principles we adhered to explicit.
We neither claim that these modelling
principles are universal, nor that control-flow is a notion that can
be defined in all scenarios where causality applies.
They applied, however, to the \exnumber{} examples we found in the
literature, as we will see.

We assume the modeller has an intuition of how a variable
assignment translates to an (intuitive) `course of events', and when
a  `course of events' should be considered equivalent to
another. 
As discussed 
in the previous paragraph,
we restrict intervention to
data flow
variables.
Hence, we consider
only variable 
assignments 
$\vec \calV = \vec v$, 
that
result from an intervention on the
data flow variables, but not control flow variables, i.e.,
there is a context $\vec u$ and data variable assignment $\vec v_\dat$
s.t.
$\sM [\vec \Vdat \gets \vec v_\dat] (\vec \calV = \vec v)$.
For brevity, we call these assignments \emph{valid}.

\begin{enumerate}
    \item Each control flow variable $V^\rch$ should correspond to
        a relevant event, and vice versa.
        \label{it:correspond}
    \item For every valid assignment, 
        $V^\rch$ should be $\top$ if the
        corresponding event occurred.
        \label{it:top}
    \item For every valid assignment, $V^\rch$ should be $\bot$ if the
        corresponding event did not, or \emph{did not yet},
        occur.
        \label{it:bot}
    \item 
        A control flow variable should only be a parent of another
        control flow variable (in $\Gctl$) iff 
        the occurrence of the
        event corresponding to the child depends on the occurrence of
        the event corresponding to the parent.
        \label{it:parent}
\end{enumerate}

\begin{example}[Early preemption]\label{ex:shot-poisoned}
    Victoria's coffee is poisoned by her bodyguard ($B=1$),
    but before the poison takes effect, she is shot by an assassin
    ($A=1$). She dies ($D=1)$.
\end{example}

Following modelling principles~\ref{it:correspond} and~\ref{it:top},
we introduce at least the control flow variables $P_r$, $S_r$ and $\mathit{PE}_r$,
which, if set to $\top$, represent the events `Victoria is poisoned', `Victoria is
shot' and `the poison takes effect', respectively. 
The poison can only take effect if Victoria was not shot,
but, modelling principle~\ref{it:bot} dictates that $S_r=\bot$ 
only means that she was not \emph{yet} shot, i.e., it might just be
that not enough time has passed, but she will eventually get shot
before the poison takes effect.
We thus need to introduce a fourth control-event, 
`Victoria was not shot during the time the poison needs to take effect',
represented by $\mathit{NS}_r$.
While $F_{S_r} = (A=1)$ and 
$F_{\mathit{NS}_r} = \neg (A=1)$,
and 
thus all valid assignments result in one being the negation of the
other,
we will later consider the coming about of
a course of events, which includes counterfactuals scenarios where
a course of events can be incomplete.
Thus, it is possible that, in a counterfactual scenario,
$S_r=\mathit{NS}_r=\bot$, which intuitively means that not
enough time has passed for the poison to take effect.

Following modelling principle~\ref{it:parent},
$\mathit{PE}_r$ is a child of both $P_r$ and $\mathit{NS}_r$.
The poison takes only effect ($\mathit{PE}_r=\top$) if 
it was administered ($P_r=\top$) and
some time has passed without Victoria getting shot 
($\mathit{NS}_r=\top$).
Following the same principle, $\mathit{PE}_r$ is \emph{not} a child of
$S_r$,
as $\mathit{S}_r=\bot$ could mean
that Victoria is not getting shot at all, but also that she is
\emph{not yet} getting shot. Hence,
$F_{\mathit{PE_r}}= \neg S_r \land P_r$ would be incorrect,
as the poison may or may not take effect due to the shot occurring
later.
Note that the control-flow graph is not linear in this case,
representing the independence of poisoning and the shooting.

\section{Preserving control flow}\label{sec:bogus}
We present further examples that are problematic for
existing definitions of cause in literature, followed by a new
definition of cause (using control flow) that handles all these
examples and many others.
Consider the following example known as
`bogus prevention'~\cite{causal-powers,Hitchcock2007-HITPPA}.

\begin{example}[Bogus prevention, branching]\label{ex:bogus}
An
assassin has a change of heart and refrains from putting poison into
    the victim's coffee ($P=0$). Later, the bodyguard puts antidote into the
    coffee ($A=1$). 
    Is putting the antidote into the coffee the reason the victim
    survives ($S=1$)? 
Let 
    $\calU=\set{U_P,U_A}$,
    $\calV=\set{P,A,S}\cup\set{\Rpoisoned,\Rnotpoisoned,\Rneutralized}$
    and the following equations
    describe a causal model $M_\mathit{bogus}$ with control flow
    variables $\set{\Rpoisoned,\Rnotpoisoned,\Rneutralized}$.
    \begin{align*}
        P & = U_P  & A & = U_A \\
        \Rnotpoisoned & = \neg(P=1) & 
        \Rpoisoned & = (P=1) \\
        \Rneutralized & = \Rpoisoned \land (A=1) &
        S &= \begin{cases}
            1 & \text{if $\Rnotpoisoned=\top$} \\
            1 & \text{if $\Rneutralized=\top$} \\
            0 & \text{otherwise}
        \end{cases}
    \end{align*}
\end{example}

Here, $P_r$ and $\mathit{NP}_r$ model the control flow after a conditional
checking if the poison was or was not administered. In the positive
branch, $P_r$ is set, and only there $N_r$ can be reached, namely if
the antidote was given and thus the poison neutralized. $S$ is set to
1 once the poison was neutralized or if it was not administered.

This example is known to be
problematic for the counterfactual approach to causation.  For
instance, in Halpern's
modelling~\cite[Example~3.4]{DBLP:conf/ijcai/Halpern15}, the bodyguard
putting in antidote is part of a cause.
Similarly, it is part of a sufficient cause, when intervention is
not restricted, or, if
intervention is restricted, 
the absence of the poison is not part of the sufficient cause
anymore.
Together with Hitchcock, Halpern argues that this cause could
be removed by considering normality
conditions~\cite{DBLP:journals/corr/HalpernH13}.
Blanchard and Schaffer criticise `under-constrained unclarities' of this approach,
calling theorists to `pay more attention to what counts as an apt
causal model [..] before adding more widgets into causal models'~\cite{Blanchard2014CauseWD}.%
\footnote{%
They also provide a more intuitive account of bogus prevention, 
but unfortunately, it only applies to Hitchcocks's definition~\cite{Hitchcock2001-HITTIO},
which has other shortcomings~\cite[Example~4.4]{DBLP:journals/corr/abs-1301-2275}.}.
Putting it simply: it is often unclear what is normal.
Halpern and Hitchcock~\cite[p.~26]{DBLP:journals/corr/HalpernH13} 
provide an ad-hoc solution by adding
a variable representing the chemical reaction neutralizing the poison,
basically introducing the control flow variable $\Rneutralized$ which
is true if the control flow reached a point where the poison was
previously administered ($\Rpoisoned=\top$) and the bodyguard pours
antidote into the coffee.

Intuitively, the antidote should not be considered a cause of the
victim's surviving because it was irrelevant within the \emph{actual}
course of events. We can capture this through the actual value of the
control flow variables. 
DGKSS~\shortcite{Datta2015a} and Beckers~\cite{545701}
require interventions to be consistent with the actual temporal order
in which events occurred. 
Translated to our setting,
this is akin to 
considering $\vec X$ s.t.\ for all $\vec z$
$\sM [\Vres \setminus \vec X\gets \vec z] (C = \vec \top \implies \varphi)$,
where $C$ is the actual control
flow $C:=\set{ V\in \Vrch \mid \sM V=\top}$,
and $\vec \top$ is a sufficiently long sequence consisting only of
$\top$.
But often, the coming about of the actual
course of action is as important as $\varphi$ itself.
\begin{example}[Agreement]\label{ex:agreement}
    $A$ and $B$ vote. If $A=B$, an agreement is reached 
    (signified by $R\in\Vrch$ with $F_R = (A=B)$),
    and the outcome is announced ($O=A$ if $R=1$ and
    otherwise $\bot$). $A$ and $B$ agree on $v$ in actuality.
\end{example}
If the control flow is fixed, then $A=v$ by itself is a sufficient cause of
$O=v$, despite the fact that $A$ and $B$ need to agree to produce any
outcome, as $\vec z$ needs to set $B$ to $v$ in order to preserve
$R=1$.
This illustrates that the actual control flow should not be presumed
a~priori to the cause. On the other hand, if $\varphi$ is established
early on (and monotone in time, e.g., violations of safety properties
like weak secrecy and authentication~\cite{alpern1985defining}),
there is no need to find causes for the control flow after $\varphi$
occurred.
%

Thus we propose the following definition of sufficient cause, which
forbids deviation from the actual control flow and is specific to
causal models with control flow variables.
\begin{definition}[Control flow preserving sufficient cause]\label{def:sufficient-actual-cause} 
    For a causal model $M$ with control flow variables $\Vrch$,
    let $C\subseteq \Vrch$ be the actual control flow
    in context $\vec u$, i.e.,
    the set of control nodes set to $\top$.
    Then, a \emph{control flow preserving sufficient cause} (CFPSC) is defined 
    like a sufficient cause (see
    Definition~\ref{def:sufficient-cause}), but with SF2 modified as
    follows:
    \begin{multline*}
        \textit{CFS2.}~\forall \vec z. 
            (M,\vec u) \vDash \left[ 
            (V^\rch_v)_{v\notin C} \gets \vec \bot,
            \Vres \setminus \vec X\leftarrow \vec z\right] \varphi. \\
    \end{multline*}
\end{definition}

This definition handles 
Example~\ref{ex:agreement}
correctly: 
$(A,B,O)=(v,v,v)$ is the only CFPSC,
capturing the fact that the agreement
needs to be reached in order for $v$ to be announced.
$(A,O)$ by itself is not a CFPSC, witnessed by $\vec z$ setting $B$ to
$v'\neq v$ which results in $R=\bot$.

For \emph{Bogus prevention} (Example~\ref{ex:bogus}),
which 
was the motivation for Halpern and Hitchcock's introduction of
normality conditions
and could previously 
-- in its original formulation~\cite{causal-powers} --
only be treated by means of normality conditions,
our approach
provides  a direct treatment. 
As all
control flow nodes except $C=(\Rnotpoisoned)$ are fixed to zero,
$A$ is not causally relevant. Intuitively, the point at which $A$
matters because the poison was administered ($P_r$) is not available
for any counter factual. The actual control flow $C$, where the
poison is not administered, guarantees 
the victim's surviving ($S=1$),
but needs to be established by \emph{not} administering the poison
($\Rnotpoisoned=\top$). Hence
$(P,S)$ is (the only) CFPSC,  given that $\sM
[(\Rpoisoned,\Rneutralized)\gets (\bot,\bot), A \gets z] S=1$
for $z=1,2$.

\emph{Late preemption} (Example~\ref{ex:ctl-var}
with control flow variables $BS_i$, $i\in\setN_n$)
is also handled correctly;
here $(T_1,\BS)$ is the only CFPSC, i.e., Suzy's throw caused the
bottle to shatter, but not Billie's. Because $C=(\BS_1)$, $\BS_2$ is
set to $\bot$ and Billie's throw has no bearing on the outcome.

\emph{Early preemption}, where the victim is poisoned, but shot before
the poison takes effect (Example~\ref{ex:shot-poisoned},
can be captured similarly, e.g., considering the
model described by Hitchcock~\cite[p. 526]{Hitchcock2007-HITPPA},
or our own adaptation (cf.\ Table~\ref{tab:examples}).
%
Intuitively, $C$ captures the control flow where the victim is shot
but the poison has not yet taken effect. Setting its complement to
$\bot$ correctly disregards the control flow representing the 
poison taking effect due to the victim not being shot.


\section{Related work}\label{sec:related}

We first discuss related work on sufficient causation,
which is the basis for CFPSC, then
related work concerning control flow 
and finally link our insights on control flow to the notion of
structural contingencies in the literature. 

\paragraph{Sufficient causation}

Going back to Lewis~\shortcite{Lewis1973-LEWC},
most definitions of actual causation investigate
claims of the form `had A not occurred, B would not have occurred'.
The notions put forward by Pearl and Halpern~\cite{%
DBLP:conf/ijcai/Halpern15,%
DBLP:journals/corr/abs-1301-2275} follow this idea, which arguably
captures a form of \emph{necessary} causation. We elaborate this point
at the end of this section.

By contrast,
Datta, Garg, Kaynar, Sharma, and Sinha
aim 
at capturing
minimal sequence of
protocol actions \emph{sufficient} to provoke a violation, in order to
provide a tool for forensics as well as a building block for
accountability~\cite{Datta2015a}.
The appeal of sufficient causation is that there is 
a clear interpretation of what it means for $A$ to
be part of a sufficient cause $(A,B)$: $A$ is (jointly with $B$)
causing the event. 
Notions of necessary causation typically lack this kind of
interpretation and require secondary notions like blame to determine
joint responsibility. 
Furthermore,
in particular in the context of distributed systems and program
analysis,
it comes in handy 
for debugging and forensics
that each
sufficient cause 
basically
captures a chain of
events which, on its own, leads to an outcome~\cite{tracing-causes}.

However, necessary causes are more succinct and seem to
capture what is meant by $A$ causes $B$ in natural language better. We
suspect the latter is because natural language often uses
``cause'' to mean ``part of a cause'', but weighing these two
notions against each other is not the scope of this work, and more
ever, depends on the application one has in mind. We are interested in
the coming about of events, so we focus on sufficient causation.

DGKSS's notion of causation~\cite{Datta2015a}
is based on a source code transformation of
non-branching programs within a simple process calculus. We therefore
cast their idea of intervening on events that do \emph{not}
appear in the candidate cause within Pearl's framework (see
Definition~\ref{def:sufficient-cause}).
This methodology allows us
to understand and generalize effects implicit in the
definition of the calculus and translate them back.
Sufficient causes are different from DGKSS's cause traces in that the latter yield
entire traces and that intervention is performed on code instead of
variables. DGKSS's calculus fixes the control flow to the
actual control flow, since their cause traces
contain the line number of the statement effectuating each event.
Every intervention on these cause traces 
(which roughly corresponds to the $z$ in SF2)
needs to contain these 
line numbers in the same order and is disregarded otherwise.
Example~\ref{ex:agreement} was the motivation to depart from this and
capture the coming about of the actual control flow.

Besides DGKSS's work,
there is only little work on sufficient causation
formalizing which
events are jointly sufficient to cause an outcome.
Going back to early attempts of formulating actual causation in a purely
logical framework~\cite{Mackie1965-MACCAC-4},
Halpern~\shortcite{DBLP:conf/kr/Halpern08a} formulates the 
NESS test~\cite{wright1987causation} within  Pearl's causality
framework.
As with the NESS test, this notion only allows for singular causal
judgements (\cite[Theorem~5.3]{DBLP:conf/kr/Halpern08a})
and is thus not suited for capturing joint causation.
%
Halpern~\shortcite{Halpern:2016:AC:3003155} also proposed a notion of
sufficient causes which requires a sufficient cause to be
a) part of all actual causes (hence inducing three variants of the
definition, for each notion of actual causation put forward)
and
b) to ensure $\varphi$ for all
contexts.
If contingencies are restricted to control flow variables and causes
to data flow variables, the first condition is very similar to
CFPSC.
%
By way of the second condition, one avoids that $\varphi$ appears in
all sufficient causes (as it is a trivial actual cause of itself),
however, this condition prohibits capturing joint causation in cases
where rare external circumstances (e.g., strong wind making lighting
the cigarette in Example~\ref{ex:forest-conjunctive} impossible)
could have prevented the outcome altogether, although they actually
did not.
For distributed systems, this is almost always the case due to
possible loss of messages in transition,
hence we consider this criterion too strong
for these purposes.

\paragraph{Causation and control flow}

Instead of control flow,
Beckers extends structured equations with time~\cite[Part II]{545701}. His proposal for actual
causation involves a notion very similar to the NESS criterion. The
notion derived from it does not capture joint causes, but captures many
preemption examples. 

Sharma's thesis extends DGKSS's model with some
control flow: the $\oplus$ operator can capture choices the parties
make, e.g., $V:=0 \oplus 1$ lets the party set $V$ to either 0 or 1~\cite{sharma-thesis}.
As agents can still not branch, the previous discussion on DGKSS's
paper applies.

Besides Beckers and DGKSS, there are other
formalisms for causal models that include
control flow, but do not aim at capturing actual causality~\cite{%
giunchiglia2004nonmonotonic,%
turner1999logic}. 
We purposefully formulated control flow within Pearl's causality
framework, to compare with existing definitions and provide insight
into how control flow can guide the modelling task and improve
definitions.

\paragraph{Relation to structural contingencies in Halpern 2015}

We review Halpern's modification~\shortcite{DBLP:conf/ijcai/Halpern15}
of Halpern and Pearl's definition of actual
causes~\shortcite{DBLP:journals/corr/abs-1301-2275}.
First, because it is well-known,
second, because it employs a secondary notion called \emph{structural
contingencies}, which appears to be related to control flow.
We 
a) give evidence that contingencies relate to control flow
wherever they are successful,
b) show that they are problematic if
data flow is involved, and
c) provide an interpretation of structural contingencies 
in terms of control flow,
supporting the argument that
preemption is first a modelling problem, which should be covered by
distinguishing control flow from data flow, and then a matter of the
definition of a cause.

\begin{definition}[Review: actual cause%
    ]
    \label{def:actual-cause} 
    $\vec X = \vec x$ is an \emph{ actual cause} of $\varphi$ in
    $(M,\vec u)$ if the following three conditions hold.
    \begin{enumerate}
        \item [AC1 and AC3.] Just like SF1 and SF3.
        \item [AC2.] There are
            $\vec W$, $\vec w$ and $\vec x'$ such that 
            $(M,\vec u) \vDash (\vec W = \vec w)$,
            and
                $(M,\vec u) \vDash 
                    [\vec X \leftarrow \vec x', 
                    \vec W \leftarrow \vec w, 
                ] \neg \varphi$.
    \end{enumerate}
\end{definition}

AC2 is a generalisation of the intuition behind Lewis'
counterfactual.
The special case
$\vec W = ()$ corresponds to Lewis' reasoning that
$\vec X = \vec x$ is \emph{necessary} for $\varphi$ to hold, because
there exists a
counterfactual setting $\vec x'$ that negates
$\varphi$.
AC2 weakens this condition in order to capture causes that are
`masked' by other events. 
The set of variables $\vec W$ is called \emph{structural contingency},
as it captures the aspects of the actual situation 
under which $\vec X=\vec x$
is a cause. Variables in $\vec W$ can only be fixed to their actual
values. 

We observe that in all examples in Halpern's paper where 
non-empty contingencies appear
(Suzy-Billy, Ex.~3.6 and Ex.~3.7 after adding variables),
they
exclusively contain variables added to the model to
``describ[e] the mechanism that brings about the result''.
These are the precisely variables we would consider control flow variables.
The only exceptions are Example~3.9b and~3.10, where Halpern points out 
that Definition~\ref{def:actual-cause} gives
unintuitive results. 
Definition~\ref{def:sufficient-actual-cause},
captures Halpern's intuition correctly 
(see Table~\ref{tab:examples}, Train~\cite{Hall2000-HALCAT-7} and Careful
Poisoning).

While structural contingencies work well if they contain control flow
variables, they can give unintuitive results if they 
contain data flow variables.
Consider
the following causal model $M_\mathit{BoS}$ inspired by the
`Battle of Sexes' two-player game.
\begin{example}[Bach or Stravinsky]\label{ex:bos}
    A couple ($P_1,P_2\in\calV$) 
    agreed on meeting this evening,
    but they cannot recall whether they wanted to attend a Bach or
    a Stravinsky concert ($\calR(P_1)=\calR(P_2)=\set{B,S}$). 
    They are taking the train  $(T=1)$, but only if they both go to
    the same concert ($F_C = P_1=P_2$ for the control flow variable $C$).
    Contrary to the awkward situation in the 
    famous two-player game,
    they have left a note $N\in\calU, \calR(N)=\set{B,S}$ 
    in their calendar, which helps them
    remember ($F_{P_1}=N$ and $F_{P_2}=N$).
    Is the fact that the note reminds them to attend the Bach concert ($N=B$)
    a cause for taking the train together ($T=1$)?
\end{example}

Definition~\ref{def:actual-cause},
as well as the definition preceding it~\cite{DBLP:journals/corr/abs-1301-2275},
lead to an unintuitive
result because variables that concern data flow are considered as
contingencies.
No matter which value is chosen for $N$, $T$ is always 1.
 However,
$\vec W$ can be set to $P_1$ or $P_2$,
fixing it to the actual value of $N$.
Hence, for, e.g., $\vec u=(B)$,
it holds that 
$(M_\mathit{BoS},\vec u) \vDash 
[N \gets S, \vec W = (P_2) \gets B ] (T \neq 1)$,
and thus the
choice of the input is considered a cause for $T=1$,
although $1$ is output no matter what the input is.
It appears that $P_1$ and $P_2$ should not be permissible contingencies.
Definition~\ref{def:sufficient-actual-cause} handles this example
correctly: $(P_1,P_2,T)$ is the only CFPSC, as the control flow $(C)$ 
is preserved no matter what value $N$ has, as long as $P_1$ and $P_2$
agree on doing what it says (see Table~\ref{tab:examples}, BoS).
%
%
%
Our conclusion is that 
contingencies should be restricted to control flow variables 
to avoid such spurious causes.

Thus, if the use case allows for making control flow variables
explicit,
we can restrict 
structural contingencies $\vec W$ to control flow variables
and
actual causes $\vec X$ to data flow variables,
and consider AC2' as follows:
\begin{itemize}
    \item [AC2'.] For $\vec X\subseteq \Vres$,
        $C\subseteq \Vrch$ and $\sM \vec C= \vec c$ a subset of
        the actual control flow,
        there is $\vec x'$ such that 
        $\sM [\vec C \gets \vec c, \vec X \gets \vec x'] \neg \varphi$.
\end{itemize}
Under these circumstances, we can relate actual causes and CFPSC
by comparing AC2' to CFS2. A~priori, both are very different:
AC2 (and AC2') formulate
a necessity criterion on $\vec X$,\footnote{%
If $\vec X = \vec x$ is an actual cause under contingency
$\vec W = \vec w$, then $\vec X'= \vec X\cdot\vec W = \vec x\cdot \vec w$ 
is a necessary cause, i.e.,
$(M,u)\models [\vec X' \gets \vec x\cdot \vec w]\neg \varphi$.}
while CFS2 formulates a sufficiency criterion. 
Interestingly, there is a duality between necessary and sufficient
causes~\cite{DBLP:journals/corr/abs-1710-09102}, which we can use to
compare the two w.r.t.\ their treatment of control flow:
If the
set of variables is finite,
the set of all sufficient causes 
can be obtained from the set of all
necessary causes $\calX = \set{\vec X_1,\ldots,\vec X_m}$
by, first, considering them 
as a boolean formula in $\vec X$ in disjunctive
normal form (DNF) ($\vec X$ is in $\calX$ iff either $\vec X$ equals
$\vec{X_1}$, or if it equals $\vec X_2$, etc.), second, computing the
conjunctive normal form of this formula and, finally, switching $\land$ and
$\lor$.
For example, $(A,B)$ is the only sufficient causes in the
conjunctive forest fire example,
and thus
$A$ and $B$ are both necessary causes.
We adapted this result to CFPSCs (see 
the Appendix for theorem and proof)
and obtain a dual definition of
control flow preserving \emph{necessary} causes, where CFS2 translates to
\begin{itemize}
    \item [CFN2.] For $\vec X\subseteq \Vres$ and
        for $C\subseteq \Vrch$ the actual control flow,
        there is $\vec x'$ such that 
        \[ \sM[ (V^\rch_v)_{v\notin C} \gets \vec \bot, \vec X \gets \vec x']\neg\varphi. \]
\end{itemize}
We can now compare AC2' to CFN2 to get a clear picture of
how actual causation handles control flow, if we incorporate
the distinction of control flow variables and data flow variables as
discussed above.
AC2' is much more liberal in how the counter-factual control flow can
be related to the actual control flow.
By choosing an arbitrary subset of all control flow variables, not
only those set to $\top$ or $\bot$,
each counterfactual setting of a control flow variable
may enforce the actual control flow (if it is fixed to $\top$),
prevent counterfactual flow that contradict the actual course of
events (if it is fixed to $\bot$),
but may also just be computed based on the equations (if it is not
part of the subset).
CFN2 is more rigid in this
respect, strictly prohibiting the counter-factual control flow to 
deviate from the actual control flow, but leaving it otherwise free
(motivated by Example~\ref{ex:agreement}).
This explains, e.g., the difference in Weslake's Careful Poisoning
example (see Table~\ref{tab:examples}), where the assassin only adds
poison to the coffee if he is sure the antidote was added previously. 
The antidote is (wrongly to most)
considered an actual cause for the victim surviving,
as the counterfactual where it is not administered
can still consider the control flow where the assassin added the
poison.

We summarize: a restriction of \emph{structural contingencies}
to control flow seems to avoid unintuitive results in some cases,
without losing accuracy on any examples we considered.
Given this restriction, contingencies obtain an interpretation in
terms of the relation between the actual control flow and the
control flow considered in counterfactuals. In comparison to CFPS2,
this relation is much more loose. The Careful Poisoning example
suggests that it is too loose.

\begin{table*}
\begin{minipage}{\textwidth} 
    \resizebox{\textwidth}{!}{
        \raggedright
\begin{tabular}{lp{6.2cm}p{4cm}p{5.2cm}}
{} & {Definition~\ref{def:sufficient-cause}} & {Definition~\ref{def:actual-cause}} & {Definition~\ref{def:sufficient-actual-cause}}\\

\toprule
{\raggedright Forest fire, Ex.~\ref{ex:forest-conjunctive}} & $(\mathit{A}, \mathit{B}, \mathit{FF})$ & $(\mathit{A})$, $(\mathit{B})$, $(\mathit{FF})$ & $(\mathit{A}, \mathit{B}, \mathit{FF})$\\
{\raggedright --- disjunctive (overdet.) \cite[p. 278]{Hall2004-HALTCO-4}} & $(\mathit{B}, \mathit{O})$, $(\mathit{S}, \mathit{O})$ & $(\mathit{B}, \mathit{S})$, $(\mathit{O})$ & $(\mathit{B}, \mathit{O})$, $(\mathit{S}, \mathit{O})$\\
{\raggedright --- disjunctive, ext. \cite[Ex. 3.7]{DBLP:conf/ijcai/Halpern15}} & $(\mathit{MD}, \mathit{L}, \mathit{C}, \mathit{FF})$, $(\mathit{MD}, \mathit{B}, \mathit{C}, \mathit{FF})$, $(\mathit{L}, \mathit{A}, \mathit{C}, \mathit{FF})$ & $(\mathit{MD})$, $(\mathit{L})$, $(\mathit{C})$, $(\mathit{FF})$ & $(\mathit{MD}, \mathit{L}, \mathit{FF})$\\
\midrule{\raggedright Late preemption, Ex.~\ref{ex:ctl-var}} & $(\mathit{T1}, \mathit{BS1}, \mathit{BS})$, $(\mathit{T2}, \mathit{BS1}, \mathit{BS2}, \mathit{BS})$ & $(\mathit{T1})$, $(\mathit{BS1})$, $(\mathit{BS})$ & $(\mathit{T1}, \mathit{BS})$\\
{\raggedright Early preemption \cite[p. 526]{Hitchcock2007-HITPPA}} & $(\mathit{A}, \mathit{D_1}, \mathit{D_2})$, $(\mathit{B}, \mathit{P}, \mathit{D_2})$ & $(\mathit{A})$, $(\mathit{D_1})$, $(\mathit{D_2})$ & $(\mathit{A}, \mathit{D_2})$\\
{\raggedright --- (ctl), Ex.~\ref{ex:shot-poisoned}\footnote{Model with $\Vrch=\set{P_r,S_r,\mathit{PE}_r, \mathit{NS}_r}$ and $P_r=B$, $S_r=A$, $\mathit{NS}_r=\neg A$, $\mathit{PE}_r=P_r \land \neg A$ and $D=1$ iff $S_r=\top$ or $\mathit{PE}_r=\top$.}} & $(\mathit{A}, \mathit{S_r}, \mathit{D})$, $(\mathit{B}, \mathit{P_r}, \mathit{S_r}, \mathit{PE_r}, \mathit{NS_r}, \mathit{D})$ & $(\mathit{A})$, $(\mathit{S_r})$, $(\mathit{D})$ & $(\mathit{A}, \mathit{D})$\\
\midrule{\raggedright Bogus prevention \cite{causal-powers}} & $(\mathit{P}, \mathit{S})$, $(\mathit{A}, \mathit{S})$ & $(\mathit{P}, \mathit{A})$, $(\mathit{S})$ & $(\mathit{P}, \mathit{S})$, $(\mathit{A}, \mathit{S})$\\
{\raggedright --- ad-hoc \cite[p. 29]{DBLP:journals/corr/HalpernH13}} & $(\mathit{P}, \mathit{S})$, $(\mathit{A}, \mathit{S}, \mathit{PN})$ & $(\mathit{P})$, $(\mathit{S})$ & $(\mathit{P}, \mathit{S})$\\
{\raggedright --- (ctl), Ex.~\ref{ex:bogus}} & $(\mathit{P}, \mathit{S}, \mathit{NP_r})$, $(\mathit{A}, \mathit{S}, \mathit{P_r}, \mathit{NP_r}, \mathit{N_r})$ & $(\mathit{P})$, $(\mathit{S})$, $(\mathit{NP_r})$ & $(\mathit{P}, \mathit{S})$\\
{\raggedright --- (ctl, reversed)\footnote{Like Ex. \ref{ex:bogus}, but reversed control flow: $D=0$ iff $\mathit{NP}_r=\top$ (poison not administered) or $\mathit{PN}_r=\top$ (poison neutralised).}} & $(\mathit{P}, \mathit{D}, \mathit{NP_r})$, $(\mathit{A}, \mathit{D}, \mathit{NP_r}, \mathit{P_r}, \mathit{PN_r})$ & $(\mathit{P})$, $(\mathit{D})$, $(\mathit{NP_r})$ & $(\mathit{P}, \mathit{D})$\\
{\raggedright Careful Poisoning \cite[Ex.~11]{weslake2015partial}} & $(\mathit{A}, \mathit{D})$, $(\mathit{P}, \mathit{D})$ & $(\mathit{A})$, $(\mathit{D})$ & $(\mathit{A}, \mathit{D})$, $(\mathit{P}, \mathit{D})$\\
{\raggedright --- (ctl)\footnote{Model with $\Vrch=\set{\mathit{NA}_r,A_r,P_r}$ and $\mathit{NA}_r=\top$ iff $A=0$, $A_r=\top$ iff $A=1$, $P_r=\top$ iff $NA_r=\top\land P=1$ and $D=1$ iff $P_r=\top$.}} & $(\mathit{A}, \mathit{NA_r}, \mathit{P_r}, \mathit{D})$, $(\mathit{NA_r}, \mathit{A_r}, \mathit{P_r}, \mathit{P}, \mathit{D})$ & $(\mathit{A})$, $(\mathit{NA_r})$, $(\mathit{P_r})$, $(\mathit{D})$ & $(\mathit{D})$\\
\midrule{\raggedright Train \cite[Ex.~4]{DBLP:journals/corr/abs-1106-2652}} & $(\mathit{F}, \mathit{RB}, \mathit{A})$, $(\mathit{LB}, \mathit{RB}, \mathit{A})$ & $(\mathit{F}, \mathit{LB})$, $(\mathit{RB})$, $(\mathit{A})$ & $(\mathit{F}, \mathit{RB}, \mathit{A})$, $(\mathit{LB}, \mathit{RB}, \mathit{A})$\\
{\raggedright --- \cite{Hall2000-HALCAT-7}} & $(\mathit{F}, \mathit{RT}, \mathit{A})$ & $(\mathit{F})$, $(\mathit{RT})$, $(\mathit{A})$ & $(\mathit{F}, \mathit{A})$\\
{\raggedright --- (ctl)\footnote{Model with $\Vrch=\set{L_r,R_r}$, where $L_r=\top$ iff $\neg F$, $\mathit{R}_r=\top$ iff $F$ and $A=1$ iff $L_r=\top$ or $R_r=\top$, allowing train to get stuck.}} & $(\mathit{F}, \mathit{R_r}, \mathit{A})$, $(\mathit{L_r}, \mathit{R_r}, \mathit{A})$ & $(\mathit{F})$, $(\mathit{R_r})$, $(\mathit{A})$ & $(\mathit{F}, \mathit{A})$\\
\midrule{\raggedright Prisoner \cite{hopkins2003clarifying}} & $(\mathit{C}, \mathit{D})$ & $(\mathit{C})$, $(\mathit{D})$ & $(\mathit{C}, \mathit{D})$\\
{\raggedright Backup \cite[Ex.~1]{weslake2015partial}} & $(\mathit{T}, \mathit{V})$, $(\mathit{S}, \mathit{V})$ & $(\mathit{T})$, $(\mathit{V})$ & $(\mathit{T}, \mathit{V})$, $(\mathit{S}, \mathit{V})$\\
{\raggedright --- (ctl) \cite[Ex.~1]{weslake2015partial}} & $(\mathit{T}, \mathit{V}, \mathit{T_r})$, $(\mathit{S}, \mathit{V}, \mathit{T_r}, \mathit{NT_r}, \mathit{S_r})$ & $(\mathit{T})$, $(\mathit{V})$, $(\mathit{T_r})$ & $(\mathit{T}, \mathit{V})$\\
{\raggedright Command \cite[Ex.~8]{weslake2015partial}} & $(\mathit{M}, \mathit{C})$ & $(\mathit{M})$, $(\mathit{C})$ & $(\mathit{M}, \mathit{C})$\\
\midrule{\raggedright Agreement, Ex.~\ref{ex:agreement}} & $(\mathit{A}, \mathit{B}, \mathit{R}, \mathit{O})$ & $(\mathit{A})$, $(\mathit{B})$, $(\mathit{R})$, $(\mathit{O})$ & $(\mathit{A}, \mathit{B}, \mathit{O})$\\
{\raggedright BoS, Ex.~\ref{ex:bos}} & $(\mathit{P_1}, \mathit{P_2}, \mathit{C}, \mathit{T})$ & $(\mathit{N})$, $(\mathit{P_1})$, $(\mathit{P_2})$, $(\mathit{C})$, $(\mathit{T})$ & $(\mathit{P_1}, \mathit{P_2}, \mathit{T})$\\
\midrule{\raggedright Switch \cite[p.~16]{weslake2015partial}} & $(\mathit{S}, \mathit{L2}, \mathit{I})$, $(\mathit{L1}, \mathit{L2}, \mathit{I})$ & $(\mathit{S})$, $(\mathit{L2})$, $(\mathit{I})$ & $(\mathit{S}, \mathit{L2}, \mathit{I})$, $(\mathit{L1}, \mathit{L2}, \mathit{I})$\\
{\raggedright Combination Lamp \cite[p.~19]{weslake2015partial}} & $(\mathit{B}, \mathit{C}, \mathit{L})$ & $(\mathit{B})$, $(\mathit{C})$, $(\mathit{L})$ & $(\mathit{B}, \mathit{C}, \mathit{L})$\\
{\raggedright Shock \cite[p.~17]{weslake2015partial}} & $(\mathit{B}, \mathit{C}, \mathit{C1})$ & $(\mathit{A})$, $(\mathit{B})$, $(\mathit{C})$, $(\mathit{C1})$ & $(\mathit{B}, \mathit{C})$\\
{\raggedright Push A \cite[p.~26]{weslake2015partial}} & $(\mathit{P}, \mathit{B}, \mathit{H}, \mathit{D})$, $(\mathit{P}, \mathit{T}, \mathit{H}, \mathit{D})$ & $(\mathit{P})$, $(\mathit{B}, \mathit{T})$, $(\mathit{H})$, $(\mathit{D})$ & $(\mathit{P}, \mathit{B}, \mathit{H}, \mathit{D})$, $(\mathit{P}, \mathit{T}, \mathit{H}, \mathit{D})$\\
{\raggedright --- (ctl)\footnote{Model with $\Vrch=\set{P_r,\mathit{NP}_r}$, where $P_r=\top$ iff $P=1$, $\mathit{NP}_r=\top$ iff $P=0$ and $H=1$ iff either $P_r=\top \land T=1$  or $\mathit{NP}_r=\top \land B=1$.}} & $(\mathit{P}, \mathit{T}, \mathit{P_r}, \mathit{H}, \mathit{D})$ & $(\mathit{P})$, $(\mathit{T})$, $(\mathit{P_r})$, $(\mathit{H})$, $(\mathit{D})$ & $(\mathit{P}, \mathit{T}, \mathit{H}, \mathit{D})$\\
{\raggedright Push B \cite[p.~26]{weslake2015partial}} & $(\mathit{P}, \mathit{T}, \mathit{H}, \mathit{D})$ & $(\mathit{P})$, $(\mathit{T})$, $(\mathit{H})$, $(\mathit{D})$ & $(\mathit{P}, \mathit{T}, \mathit{H}, \mathit{D})$\\
{\raggedright Fancy Lamp \cite[p.~31]{weslake2015partial}} & $(\mathit{A}, \mathit{N3}, \mathit{L})$, $(\mathit{B}, \mathit{N1}, \mathit{L})$ & $(\mathit{A}, \mathit{B})$, $(\mathit{A}, \mathit{N1})$, $(\mathit{B}, \mathit{N3})$, $(\mathit{N1}, \mathit{N3})$, $(\mathit{L})$ & $(\mathit{A}, \mathit{N3}, \mathit{L})$, $(\mathit{B}, \mathit{N1}, \mathit{L})$\\
\midrule{\raggedright Vote \cite[Ex.~4.1]{DBLP:conf/ijcai/Halpern15}} & $(\mathit{V_1}, \mathit{M}, \mathit{P})$, $(\mathit{V_2}, \mathit{M}, \mathit{P})$ & $(\mathit{V_1}, \mathit{V_2})$, $(\mathit{M})$, $(\mathit{P})$ & $(\mathit{V_1}, \mathit{M}, \mathit{P})$, $(\mathit{V_2}, \mathit{M}, \mathit{P})$\\
{\raggedright Ranch \cite[Ex.~3.7]{DBLP:conf/ijcai/Halpern15}} & $(\mathit{A_1}, \mathit{A_2}, \mathit{M_1}, \mathit{O})$ & $(\mathit{A_1})$, $(\mathit{A_2})$, $(\mathit{M_1})$, $(\mathit{O})$ & $(\mathit{A_1}, \mathit{A_2})$\\
{\raggedright Vote 5:2\footnote{Majority vote with 7 participants, 5 of which vote Yea, highlighting the difference between notions based on sufficiency and necessity: 4 Yeas suffice for $O=1$, but if two voters would switch to Nay, the vote would be overturned.}} & $(\mathit{V1}, \mathit{V2}, \mathit{V3}, \mathit{V4}, \mathit{O})$, $(\mathit{V1}, \mathit{V2}, \mathit{V3}, \mathit{V5}, \mathit{O})$, $(\mathit{V1}, \mathit{V2}, \mathit{V4}, \mathit{V5}, \mathit{O})$, $(\mathit{V1}, \mathit{V3}, \mathit{V4}, \mathit{V5}, \mathit{O})$, $(\mathit{V2}, \mathit{V3}, \mathit{V4}, \mathit{V5}, \mathit{O})$ & $(\mathit{V1}, \mathit{V2})$, $(\mathit{V1}, \mathit{V3})$, $(\mathit{V1}, \mathit{V4})$, $(\mathit{V1}, \mathit{V5})$, $(\mathit{V2}, \mathit{V3})$, $(\mathit{V2}, \mathit{V4})$, $(\mathit{V2}, \mathit{V5})$, $(\mathit{V3}, \mathit{V4})$, $(\mathit{V3}, \mathit{V5})$, $(\mathit{V4}, \mathit{V5})$, $(\mathit{O})$ & $(\mathit{V1}, \mathit{V2}, \mathit{V3}, \mathit{V4})$, $(\mathit{V1}, \mathit{V2}, \mathit{V3}, \mathit{V5})$, $(\mathit{V1}, \mathit{V2}, \mathit{V4}, \mathit{V5})$, $(\mathit{V1}, \mathit{V3}, \mathit{V4}, \mathit{V5})$, $(\mathit{V2}, \mathit{V3}, \mathit{V4}, \mathit{V5})$\\
\midrule{\raggedright Pollution, $k=80$ \cite[Ex.~3.11]{DBLP:conf/ijcai/Halpern15}} & $(\mathit{A}, \mathit{D})$ & $(\mathit{A})$, $(\mathit{D})$ & $(\mathit{A}, \mathit{D})$\\
{\raggedright Pollution, $k=50$ \cite[Ex.~3.11]{DBLP:conf/ijcai/Halpern15}} & $(\mathit{A}, \mathit{D})$, $(\mathit{B}, \mathit{D})$ & $(\mathit{A}, \mathit{B})$, $(\mathit{D})$ & $(\mathit{A}, \mathit{D})$, $(\mathit{B}, \mathit{D})$\\
{\raggedright Pollution, $k=120$ \cite[Ex.~3.11]{DBLP:conf/ijcai/Halpern15}} & $(\mathit{A}, \mathit{B}, \mathit{D})$ & $(\mathit{A})$, $(\mathit{B})$, $(\mathit{D})$ & $(\mathit{A}, \mathit{B}, \mathit{D})$\\

\bottomrule\end{tabular}

     }
\end{minipage}
    \caption{Set of causes for various examples from the
    literature. The suffix (ctl) marks models adapted 
    to Definition~\ref{def:ext-causal-model}.}\label{tab:examples}
\end{table*}
\section{Validation}\label{sec:examples}

Our goal was to provide an account of the relation
between control flow and causation;
Definition~\ref{def:sufficient-actual-cause} served this goal by
demonstrating that an explicit treatment of control flow can help (and
is sometimes even necessary) to treat cases where counterfactuals can change the
course of events, e.g., in cases of preemption.
We further validate our definition against a benchmark suite of
\exnumber{} examples from the literature. 
To avoid cherry-picking, we include all examples from~\cite{weslake2015partial} and \cite{DBLP:conf/ijcai/Halpern15}.
The source code is available at \url{https://github.com/rkunnema/causation-benchmark}.
%
We
encourage other researchers to use this suite to test their own
definitions of causality and to extend it with new examples.

We compare Definition~\ref{def:sufficient-actual-cause}
with
Definition~\ref{def:sufficient-cause} for illustration,
and with
Halpern's notion of actual causation 
(see Definition~\ref{def:actual-cause}) as a point of reference.
We omit notions like defaults and normality, as we want to stress that
these are not necessary to deal with examples of preemption,
and favour Halpern's notion over prior variants~\cite{DBLP:journals/corr/abs-1301-2275},
as it distinguishes between joint causation and over-determination.%
\unskip\footnote{%
The Litmus test are the two variants of the
Forest fire example.}

We include the \emph{complete} set of causes as a comma-separated list of sequences,
as, e.g., for Careful Poisoning (ctl) it is important to see that $A$ (the antidote
being administered) is \emph{not} a CFPSC\@. 
Many examples in the literature do not capture control flow in their
modelling, in which case we present results for the original
modelling and a modelling according to Definition~\ref{def:ext-causal-model}.
In all examples we get results that are satisfying according to the discussion
in the literature we cite.\footnote{For Switch and Fancy Lamp, our
Definition~\ref{def:sufficient-actual-cause} captures that
$S$, respectively, $A$ are not necessary for the
outcome by giving a sufficient cause which does
not include them.} 
For lack of space we refer to the cited
literature for deeper explanation.
The only examples we added (not counting adaptations to Definition~\ref{def:ext-causal-model})
are Examples~\ref{ex:agreement} and~\ref{ex:bos}, which we explained already, and Vote 5:2,
which contrasts the differing views sufficient causes and actual/necessary causes  
have to offer.


\section{Conclusion and future work}

In cases where control flow can be made explicit,
it is worth doing so, as this helps to deal with problematic cases like
preemption. We discussed in what way it should be taken into account, and
proposed a blue-print for doing so, as well as a definition of causality
that preserves the actual course of events.
This definition is simple and intuitive,
does not require secondary notions,
captures joint causation
and
handles all \exnumber{} example in our benchmark correctly
(with respect to the respective author's views).
Such a definition is useful for computer programs and distributed systems,
as the temporal order of events between communicating agents
can be captured by a non-deterministic scheduler simulating them.
A translation from Petri nets, Kripke structures or process calculi to extended
causal models that adheres to the modelling principles discussed in this work
can be used to argue the soundness of causality notions formulated within
these formalisms.
In fact, we advocate this approach, as we believe that a thorough discussion of
causality requires a common language.

Vice versa, the causality literature stands to benefit from such
translations. Due to the generality of Pearl's framework, modelling is
more art than science, which raises concerns about the falsifiability
of theories of causation.
For bogus prevention and other
difficult examples, e.g., late preemption,
the `correct' modelling has been
debated again and again for each use case specifically.
By transferring and analysing  existing domain-specific definitions, e.g., DGKSS's definition,
to Pearl's framework,
we can find a common ground for the discussion of the `right'
modelling, and encourage a similar treatment for other domains where
causality is of interest.
This way, experiences and observation in well-understood domains can be fed
back to the general case and modelling principles be exposed that are otherwise
easily overlooked when modelling abstract scenarios ad hoc.


\appendix
\bibliographystyle{eptcs}
\bibliography{references}
\section{Appendix: Duality between CFS2 and CFN2}

Fix some finite set $\Vres\subset \calV$ and some ordering
$\set{V_1,\ldots,V_n} = \Vres$ and 
let $\overline X$ denote the following representation of 
$X \subseteq \Vres$ relative to $\Vres$:
$\overline X \defeq{} (1_X(V_1),\ldots,1_X(V_n))$.
Any set of sets of variables $\calX=\vec X_1,\ldots, \vec X_m$ can be
represented as a boolean formula in disjunctive normal form (DNF) that
is true whenever $\overline X$ is the bitstring representation of
$X\subseteq\Vres$ such that $X\in \calX$:
$
(\overline X=\overline X_1 \lor \overline X=\overline X_2 \lor \cdots \lor \overline X= \overline X_m)$.
Here, $\overline X=\overline X_i$ 
merely compares two bitstrings for equality, i.e., 
it is a conjunction 
$\bigwedge_{j\in \setN_{n}} \overline X|_j = \overline X_i|_j$.

\begin{theorem}[control flow preserving necessary 
    causes]\label{thm:sufficient-necessary}
    For $\calX$ the set of (not necessarily minimal) CFPSCs, i.e.,
    adhering to SF1 and CFS2,
    let $\overline \calX$ be the DNF representation of $\calX$.
    Then the set of 
    (not necessarily minimal)
    control flow preserving actual causes, i.e., adhering to AC1 and CFN2,
    is represented by
    $\overline \calY$, which is obtained from $\overline \calX$
    by transforming $\overline \calX$ into CNF and switching $\lor$
    and $\land$. The same holds for the other direction.
\end{theorem}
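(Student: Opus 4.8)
The plan is to split the statement into a purely Boolean-algebraic fact about the syntactic transformation ``DNF $\to$ equivalent CNF $\to$ interchange $\land$ and $\lor$'' and a semantic equivalence relating CFS2 and CFN2 under set complementation. I will assume throughout that $\sM\varphi$, i.e.\ the effect actually occurs; otherwise SF1 and AC1 fail for every candidate, both $\calX$ and $\calY$ are empty, and the claim is degenerate.

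First, the Boolean part. For any propositional formula over the bits $\overline X|_1,\dots,\overline X|_n$ that contains no truth constants, interchanging $\land$ and $\lor$ produces a formula computing the \emph{dual} function $g\mapsto\bigl(\vec b\mapsto\neg g(\neg\vec b)\bigr)$; this is immediate from De Morgan and holds even when literals are negated, as they are in the minterm encoding $\overline X=\overline X_i$. Since the DNF-to-CNF conversion preserves the represented function, the prescribed transformation sends $\overline\calX$ to the formula $\overline\calY$ with $\overline\calY(\vec b)=\neg\,\overline\calX(\neg\vec b)$ for all $\vec b\in\bit^n$. As $\neg\vec b$ is exactly the indicator of the complement set, and since under $\sM\varphi$ the full set $\Vres$ is always a (trivial) sufficient cause so that $\overline\calX$ is not constant-false, this is equivalent to the set-level identity
\[
\vec X\in\calY \iff \Vres\setminus\vec X\notin\calX .
\]
The reverse direction of the theorem then comes for free, since the dual map is an involution.

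Next I would establish the semantic core, namely $\mathrm{CFN2}(\vec X)\iff\neg\,\mathrm{CFS2}(\Vres\setminus\vec X)$; together with the observation that SF1 and AC1 both collapse to the context-level condition $\sM\varphi$ (the conjunct $\vec X=\vec x$ being vacuous, as $\vec x$ denotes the actual values), this yields the displayed identity. The essential move is to align the intervention targets: CFS2 holds the cause fixed at its actual values and quantifies over its complement, so instantiating the cause as $\Vres\setminus\vec X$ makes the quantified block exactly $\vec X$, the same variables intervened on in CFN2, while the off-path control nodes $(V^\rch_v)_{v\notin C}$ are pinned to $\bot$ identically in both. Explicitly, $\mathrm{CFS2}(\Vres\setminus\vec X)$ reads
\[
\forall\vec z.\ \sM\bigl[(V^\rch_v)_{v\notin C}\gets\vec\bot,\ \vec X\gets\vec z\bigr]\varphi ,
\]
and because each modified model is acyclic and hence has a unique solution, $\sM[\cdots]\varphi$ and $\sM[\cdots]\neg\varphi$ are exact negations for every fixed $\vec z$; negating the universal quantifier therefore gives precisely $\exists\vec x'.\ \sM[(V^\rch_v)_{v\notin C}\gets\vec\bot,\ \vec X\gets\vec x']\neg\varphi$, which is CFN2($\vec X$).

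The hard part will be bookkeeping rather than ideas: one must check that ``being in the cause'' in CFS2 and ``being intervened on'' in CFN2 refer to the same variables once the cause is complemented, and that the control-flow pinning uses the same actual flow $C$ and context $\vec u$, so the only formal difference between the two conditions is the quantifier flip and the replacement of $\varphi$ by $\neg\varphi$ licensed by determinism. I would also take care to state the standing assumption $\sM\varphi$ and to dispatch the constant-function boundary cases (e.g.\ $\varphi$ independent of all of $\Vres$, giving $\calY=\emptyset$) via the usual conventions for empty CNF/DNF, since the connective swap computes the dual only when no truth constants are introduced.
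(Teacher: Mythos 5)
Your proof is correct and takes essentially the same route as the paper's: both reduce the theorem to the two facts that swapping $\land$ and $\lor$ computes the Boolean dual (i.e.\ $\vec X\in\calY \iff \Vres\setminus\vec X\notin\calX$) and that $\mathrm{CFN2}(\vec X)$ is exactly the negation of $\mathrm{CFS2}(\Vres\setminus\vec X)$, obtained by flipping the universal quantifier (using uniqueness of solutions) and the bijection $\vec X\mapsto\Vres\setminus\vec X$. The paper executes this as one chain of manipulations on the biconditional (CNF, negate both sides, rename, substitute the complement), whereas you cleanly separate the syntactic duality from the semantic equivalence and additionally flag the degenerate/truth-constant boundary cases that the paper leaves implicit.
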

\begin{proof}
    Fix an arbitrary model $M$, context $\vec u$ and let
    $C\subseteq \Vrch$
    be the actual control flow in this context.
    By Definition~\ref{def:sufficient-actual-cause},
    CFS2, we can rephrase the assumption $\calX$ as follows: 
    For all sequences of variables
    $\vec X$,
    \begin{multline}\label{eq:start}
     \vec X\in\calX \iff 
        \forall \vec z. 
            (M,\vec u) \vDash \left[ (V^\rch_v)_{v\notin C} \gets \vec \bot,
            \Vres \setminus \vec X\leftarrow \vec z\right] \varphi.
    \end{multline}
    As $\calX$ is finite, i.e., $\calX=\set{\vec X_1,\ldots, \vec
    X_n}$, we can write $\vec X\in\calX$
    as a boolean function over $\set{0,1}^n$:
    \[
    \vec X\in \calX
    \iff
    (\overline X = \overline X_1) \lor \cdots \lor (\overline X= \overline X_m).
    \]
    Like any boolean function, this function
    can be transformed into canonical CNF and thus 
    the right-hand side can be expressed as
    $c_1 \land \cdots \land c_{k}$ with some conjuncts 
    $c_i$ of form $\bigvee_{j\in\setN_{n}} (\neg) \overline X|_j$.
    Whatever these conjuncts are,
    we insert this CNF on the left-hand side of \eqref{eq:start} and
    negate both sides. Hence, for all $\vec X$,
    \begin{multline*}
        \neg c_1 \lor \cdots \lor \neg c_{k} 
        \iff 
        \exists \vec z. 
            (M,\vec u) \vDash \left[ (V^\rch_v)_{v\notin C} \gets \vec \bot,
            \Vres \setminus \vec X\leftarrow \vec z\right] \neg \varphi. 
    \end{multline*}
    (Note that $c_1,\ldots, c_k$ depend on $\vec X$.)
    We rename $\vec X$ to $\vec Z$ and $\vec z$ to $\vec x'$.
    Let $\{^b / _a\}$ denote
    $b$ literally replacing $a$. Thus, for all $\vec Z$,
\begin{multline*}
        \neg c_1\left\{^{\overline Z}/_{\overline X}\right\} \lor \cdots \lor \neg c_{k}\left\{^{\overline Z}/_{\overline X}\right\} 
        \iff 
        \exists \vec x'. 
            (M,\vec u) \vDash \left[ (V^\rch_v)_{v\notin C} \gets \vec \bot,
            \Vres \setminus \vec Z\leftarrow \vec x'\right] \neg\varphi. 
    \end{multline*}
        We can replace $\Vres\setminus \vec Z$ by a new variable $\vec X$
        and quantify over $\vec X$ again. This is valid, as
        $\vec X \mapsto \Vres\setminus \vec X$
        is a bijection between the domain of $\vec X$ and the domain
        of $\vec Z$. Note that 
        $\Vres \setminus (\Vres \setminus \vec X) = \vec X$
        and
        $\overline Z = \overline{(\Vres \setminus \vec X)} = \neg
        \overline X$.
        Thus for all $\vec X$
\begin{multline*}
        \neg c_1\left\{^{\neg \overline X}/_{\overline X}\right\} \lor \cdots \lor \neg c_{k}\left\{^{\neg \overline X}/_{\overline X}\right\} 
        \iff
        \exists \vec x'. 
            \sM
            \left[ (V^\rch_v)_{v\notin C} \gets \vec \bot,
            \vec X
            \leftarrow \vec x'\right] \neg\varphi. 
    \end{multline*}
    As each conjunct $c_i$ is a disjunction, the negation of $c_i$
    with $\overline X$ substituted by $\neg \overline X$ can be
    obtained by switching $\lor$ and $\land$. The resulting term is,
    again, a boolean formula in DNF, so $\overline \calX$ transforms
    into $\calX$ easily. 
    The reverse direction proceeds with the exact same steps, modulo
    variables naming.
\end{proof}

\end{document}